
\documentclass[letterpaper, 10 pt, conference]{ieeeconf}  

\IEEEoverridecommandlockouts                              

\overrideIEEEmargins                                      


\usepackage{graphics} 
\usepackage{epsfig} 
\usepackage{amsmath} 
\usepackage{amsfonts}

\newtheorem{proposition}{Proposition}

\newtheorem{remark}{Remark}

\newcommand{\V}{\mathcal{V}}
\newcommand{\N}{\mathcal{N}}
\newcommand{\E}{\mathcal{E}}
\newcommand{\R}{\mathbb{R}}

\title{\LARGE \bf
Passivity-Based Control of Human-Robotic Networks with Inter-Robot Communication Delays and Experimental Verification
}

\author{J. Yamauchi$^{1}$, M.W.S. Atman$^{1}$, T. Hatanaka$^{2}$,  
N. Chopra$^{3}$ and M. Fujita$^{2}$
\thanks{$^{1}$J. Yamauchi and M.W.S. Atman are with Department of Mechanical and Control Engineering, 
		Tokyo Institute of Technology, Japan
        {\tt\small {yamauchi.j.ab, atman.m.aa}@m.titech.ac.jp}}%
\thanks{$^{2}$T. Hatanaka and M. Fujita are with Department of Systems and Control Engineering, School of Engineering
		Tokyo Institute of Technology, Japan
        {\tt\small hatanaka@ctrl.titech.ac.jp}}%
\thanks{$^{3}$N. Chopra is with Department of Mechanical Engineering,
		University of Maryland, USA}%
}

\begin{document}

\maketitle
\thispagestyle{empty}
\pagestyle{empty}

\begin{abstract}

In this paper, we present experimental studies on a cooperative control system
for human-robotic networks with inter-robot communication delays.
We first design a cooperative controller to be implemented on each robot
so that their motion are synchronized to a reference motion desired by a human operator,
and then point out that each robot motion ensures passivity.
Inter-robot communication channels are then designed via so-called scattering
transformation which is a technique to passify the delayed channel.
The resulting robotic network is then connected with human operator based on passivity theory.
In order to demonstrate the present control architecture, we build an experimental testbed
consisting of multiple robots and a tablet.
In particular, we analyze the effects
of the communication delays on the human operator's behavior.

\end{abstract}

\section{Introduction}
\label{sec:intro}

Complex robotic coordination tasks in highly uncertain environments 
require the system designers to
take advantage of human operator's strengths, 
high-level decision-making and flexibility.
Motivated by these needs, semi-autonomous operation of robots
is gaining increasing research interests \cite{WZ17}.

One of the most promising design tools for such semi-autonomous systems is
passivity, as confirmed in the history of a traditional semi-autonomous robot control problem, bilateral teleoperation \cite{RTEM10}-\cite{STY10}.
In this research field, the human operator has been treated as a passive component and 
then control architectures have been established while ensuring closed-loop stability based upon this assumption.
This paradigm has also been taken in teleoperation of multiple networked robots
\cite{RTEM10}--\cite{FSSBG12}.
Rodriguez-Seda et al. \cite{RTEM10} present an architecture such that the robotic network is 
controlled by operating a leader robot so that the robots forms a specified formation while avoiding collisions.
Liu \cite{L15} extends motion synchronization on the joint space to the task space under time-varying delays. 
Fully distributed control algorithms are presented by Franchi et al. \cite{FSSBG12}, where stability
is ensured in the presence of split and join events.
Other interesting extensions of bilateral teleoperation are presented in \cite{VZ13} and \cite{SW16}.
Varnell and Zhang \cite{VZ13} employ a non-classical human-robot interfaces, namely a tablet,
and discuss the stability while assuming a human pointing model. 
Saeidi et al. \cite{SW16} introduce the notion of robot-to-human trust and mixed initiative control scheme
in order to improve the performance and reduce workload of the operator.

While the above papers consider robot dynamics together with force feedback,
the authors \cite{HCF15,HCYF17} studied another problem formulation
focusing on interactions between a human operator and kinematic robots as in \cite{ECKK14,SMH15}. 
The schematic of the intended system is illustrated in Fig. \ref{fig:scenario}.
Then, we investigated motion synchronization of robots to human references under
distributed information exchange between the human and robots, and among robots.
To this end, we presented a novel passivity-based control architecture  and proved synchronization
under explicit bilateral connections between the human and robotic network.
For implementation on real robotic systems,
both  the human-robotic network and inter-robot interactions must be implemented
using appropriate communication technology. 
In this case, the communication channels may suffer from delays 
in the transmission of information,
which is indeed one of the most important issues in bilateral teleoperation.
However, \cite{HCF15,HCYF17} did not address this issue explicitly.

\begin{figure}[t]
	\centering
	\includegraphics[width=80mm,clip]{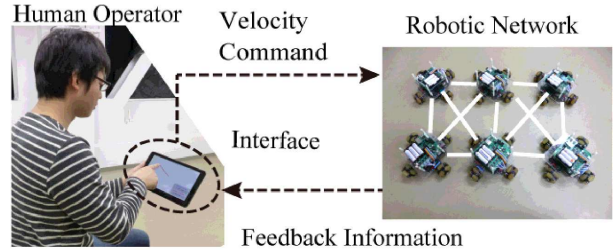}
	\caption{Information flows in the intended scenario of human-robotic 
	network system.}
	 \label{fig:scenario}
\end{figure}
%
%


Our objective is thus to extend the results of \cite{HCF15,HCYF17} 
to the case with time delays and its experimental studies.
Although both the human-robotic network and inter-robot communications may have delays,
this paper investigates the latter since this issue emerges in the one-human-multiple-robot interactions
while the former has been in-depth studied in bilateral teleoperation.
Delays in inter-robot communications have been extensively studied in the field of cooperative control
(See \cite{HCFS15} and references therein).
Among many approaches, this paper focuses on the scheme presented in 
\cite{HCFS15}, 
and synchronization is guaranteed despite the delays.
In this paper, the control architecture in \cite{HCFS15} is shown to be successfully integrated with 
the cooperative control for human-robotic networks presented in \cite{HCF15,HCYF17}. 
However, the integration is not straightforward and requires novel technical extensions in 
the formulation and this is presented in this paper.
Furthermore, as pointed out in \cite{DP16}, real deployments of robotic 
network is important from the viewpoint of difficulties
to simulate each intended task and situation faithfully.
Therefore, we investigate our proposed architecture in
a testbed and show the influences of communication delays 
not only on a robotic network but also on a real human's behaviour.

The organization of this paper is as follows: 
Section \ref{sec:problem} presents the intended problem and briefly reviews the results of \cite{HCF15,HCYF17}.
In Section \ref{sec:hrn}, we design a novel control architecture based on the scattering transformation \cite{HCFS15}, and then, 
show passivity of the resulted system.
We also show position synchronization to the reference values in Section \ref{sec:main}.
We demonstrate our results through experiments in Section \ref{sec:experiment}, 
and finally, summarize the obtained results 
in Section \ref{sec:conclusions}.

%
\section{Problem Setting}
\label{sec:problem}

In this section, we start by reviewing the problem formulation
and results presented in \cite{HCF15,HCYF17}.
Please refer to \cite{HCF15,HCYF17} for more details.

\begin{figure}[t]
	\centering
	\includegraphics[width=82mm,clip]{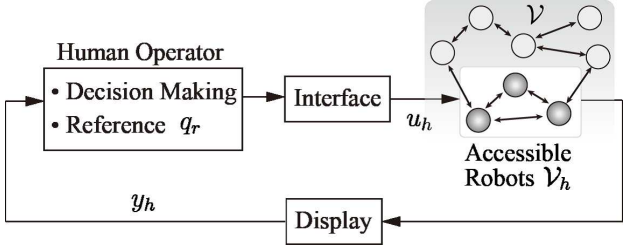}
	\caption{The system configuration for human-robotic network.}
	\label{fig:scenarioblock}
\end{figure}
%
%

\subsection{System Configuration and Objectives}

Let us consider a system with a human operator and $n$ mobile robots 
$\V = \{1, \cdots, n\}$ located on a 2-D plane as shown 
in Fig. \ref{fig:scenario}.
We suppose that the motion of every $i$-th robot is described by the kinematic model
\begin{eqnarray} \label{eq:integrator}
\dot{q}_i = u_i \quad \forall{i} \in \V,
\end{eqnarray}
where $q_{i} \in \R^{2}$ and $u_{i} \in \R^{2}$ are the position and 
velocity input of $i$-th robot, respectively.

Every robot can interact with neighboring robots and
the inter-robot information exchanges are modeled by a 
graph $G = (\V, \E),\ \E \subseteq \V \times \V$.
Then, robot $i$ has access to information of the robots belonging to 
the set $\N_{i} = \{j \in \V |\ (i,j) \in \E\}$.
Throughout this paper, 
we assume the communication graph is fixed, undirected and connected.
Differently from \cite{HCF15}, we assume that the inter-robot communication
suffers from time delays, which will be discussed 
in Section \ref{sec:problem}.

In addition to the total robots set $\V$, 
we define accessible robots set $\V_{h}$ which is a subset of $\V$ (Fig. \ref{fig:scenarioblock}).
\textit{Accessibility} means that the operator 
can send a signal to and receive feedback information from
only the robots in $\V_{h}$.
Here, we assume that 
the human operator determines a command signal $u_{h}$ based on
certain information $y_{h}$ visually fed back from $\V_{h}$ through a monitor
in front of the operator. 
The command $u_h$ is then sent back to all the accessible robots.
We also introduce the notation $\delta_{i}$ such that 
$\delta_{i} = 1$ if $i \in \V_{h}$ and $\delta_{i} = 0$ otherwise.

In this paper, we address position synchronization.
Let us assume that the human operator has a desired position denoted by $q_{r}$.
Then, the goal of the position synchronization is defined by
\begin{align}\label{eq:goal1}
\lim_{t \to \infty}\|q_i - q_r\| = 0 \quad \forall{i} \in \V.
\end{align}
The objective here is to design the robot controller $u_{i}$ and the information $y_{h}$ displayed on the monitor 
so as to guarantee the above control goal. 

\subsection{Control Architecture without Time Delays}


The authors \cite{HCF15,HCYF17} take the control input $u_{i}$ as
\begin{eqnarray}
u_{i} = u_{r,i} + \delta_{i}u_{h} \quad i \in \V.
\label{eqn:hata}
\end{eqnarray}
Then, the signal $u_{r,i}$ to achieve inter-agent motion synchronization 
is designed as
\begin{align}
\dot{\xi}_{i} 
&= 
\sum_{j \in \N_{i}} b_{ij} (q_{j} - q_{i}) \label{eq:xiorig} \\
u_{r,i}
&= 
\sum_{j \in \N_{i}} a_{ij} (q_{j} - q_{i}) 
- \sum_{j \in \N_{i}} b_{ij} (\xi_{j} - \xi_{i})
\label{eq:uriorig} 
\end{align}
where $a_{ij} = a_{ji}, b_{ij} = b_{ji}$ $\forall{i, j} \in \V$,
$a_{ij} > 0, b_{ij} > 0$ if $(i, j) \in \E$ and $a_{ij} = b_{ij} = 0$ 
otherwise.

Then, 
the collective dynamics (\ref{eq:integrator}), (\ref{eqn:hata}) 
and (\ref{eq:uriorig}) for all $i$ is passive from $u_{h}$ to $z$, 
where $z$ is defined as
\begin{eqnarray}
\label{eq:zqzv}
z := \frac{1}{m}\sum_{i \in \V_{h}}q_{i},\quad 
\end{eqnarray}
and $m$ is the number of elements of $\V_{h}$.
Based on this passivity, we let the  feedback information $y_{h}$ for the human operator be $y_{h} = z$.
Namely, the human operator can obtain the average position of the accessible robots $z$.
In \cite{HCF15}, position synchronization is theoretically proved 
in the absence of delays under a passivity assumption on
the human's decision process together with some additional assumptions.
However, inter-robot communication delays may destabilize the above control system.
This is why we present a system architecture with robustness against inter-robot communication delays.

\begin{remark}
The system architecture (\ref{eq:xiorig}), (\ref{eq:uriorig}) 
can achieve not only position synchronization but also synchronization of 
velocity at the same time, i.e.,
\begin{eqnarray}
\label{eq:goal2}
\lim_{t \to \infty}\|\dot{q}_{i} - v_{r}\| = 0,\
\lim_{t \to \infty}\|q_{i} - q_{j}\| = 0\quad  \forall{i, j} \in \V,
\end{eqnarray}
where $v_{r}$ is a constant reference velocity.
However, we omit this part because of space limitations.
\end{remark}

\section{Cooperative Control Architecture under Inter-Robot Communication Delays}
\label{sec:hrn}

\begin{figure}[t]
	\centering
	\includegraphics[width=75mm,clip]{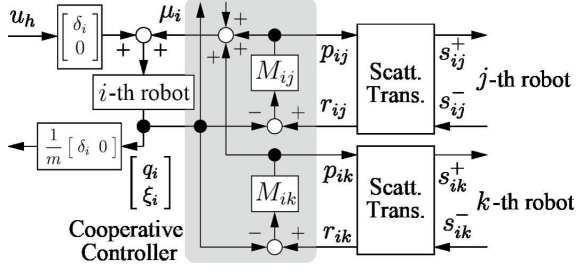}
	\caption{Block diagram of $i$-th robot and its cooperative controller including the 
	scattering transformation.}
	 \label{fig:roboti}
\end{figure}
\subsection{Passivity of Robot Dynamics}

In this section, we redesign the cooperative controller 
generating $u_{r,i}$ in order to guarantee stability and synchronization
despite the inter-robot communication delays.

In the presence of the delays, $i$-th robot cannot obtain $q_{j}(t)$ and $\xi_{j}(t)$ at time $t$.
We thus replace these signals 
by $r^{q}_{ij}, r^{\xi}_{ij}$, respectively.
Then, $i$-th robot's  dynamics is described by
\begin{align}
\dot{\xi}_{i} 
&= 
\sum_{j \in \N_{i}}b_{ij}(r^{q}_{ij} - q_{i}) \label{eq:xii} \\
\dot q_i
&= 
\sum_{j \in \N_{i}}a_{ij}(r^{q}_{ij} - q_{i}) 
- \sum_{j \in \N_{i}}b_{ij}(r^{\xi}_{ij} - \xi_{i}) + \delta_{i} u_{h} \label{eq:uri}
\end{align}
Let us now define $p_{ij} \in \R^{4}$ as
\begin{align}
\hspace{-2ex}
p_{ij} := M_{ij}
\left(
\left[
\begin{array}{cc}
\! r^{q}_{ij} \! \\
\! r^{\xi}_{ij} \!
\end{array}
\right]
- 
\left[
\begin{array}{cc}
\! q_{i} \! \\
\! \xi_{i} \!
\end{array}
\right]
\right)
=
M_{ij}
\left[
\begin{array}{cc}
\! r^{q}_{ij} - q_{i} \! \\
\! r^{\xi}_{ij} - \xi_{i} \!
\end{array}
\right], \label{eq:vij}
\end{align} 
where $M_{ij}
:= 
\left[
\begin{array}{cc}
\! a_{ij}I_{2} \!&\! -b_{ij}I_{2}\! \\
\! b_{ij}I_{2} \!&\! 0\!
\end{array}
\right]$ for all $j \in \N_{i}$ and $i$. 
It is easy to see from the definition that $M_{ij}$ is a passive map.

Then, (\ref{eq:xii}) and (\ref{eq:uri}) is described as a feedback system 
consisting of
\begin{align}
\label{eq:separate}
\left[
\begin{array}{c}
\dot{q}_{i} \\
\dot{\xi}_{i}
\end{array}
\right]
= \mu_i + 
\left[
\begin{array}{c}
\delta_{i} \\
0
\end{array}
\right] u_{h}
\end{align}
and the operation $\mu_i := \sum_{j \in \N_{i}} p_{ij}$.
Then, if there is no interaction with a human operator, 
i.e., $u_{h} \equiv 0$, and we define a storage function as
\begin{align}
\label{eq:Si}
S_{i}:=\frac{1}{2}\|q_{i}\|^{2}+\frac{1}{2}\|\xi_{i}\|^{2},
\end{align}
the system (\ref{eq:separate}) is passive from 
$\mu_i$ to $[q_{i}^{T}\ \xi^{T}_{i}]^{T}$.

From the above discussions, the system 
(\ref{eq:xii}) and (\ref{eq:uri}) can be 
reduced to a feedback interconnection of passive systems and 
a collection of passive maps $M_{ij}\ j\in {\mathcal N}_i$.
Accordingly, we can obtain the following inequality for each robot dynamics,
\begin{align}
\hspace{-1ex}
\dot{S}_{i} 
=& \left[
q_{i}^{T}\ \xi_{i}^{T}
\right] \sum_{j \in \N_{i}}p_{ij} \nonumber \\
=& 
-\!\! \sum_{j \in \N_{i}}a_{ij}\|q_{i} - r^{q}_{ij}\|^{2} 
\!+\! \sum_{j \in \N_{i}} r_{ij}^{T} p_{ij} 
\leq \sum_{j \in \N_{i}}r_{ij}^{T} p_{ij}. \label{eq:rijvij}
\end{align}
The above inequality implies that each robot dynamics is passive 
from the stacked vector of $r_{ij}$ to the stacked vector of $p_{ij}$ from all neighbor robots with respect to the storage function (\ref{eq:Si}).


%
%
\begin{figure}[t]
	\centering
	\includegraphics[width=87mm,clip]{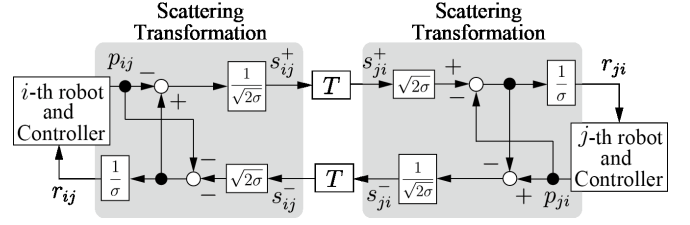}
	\caption{The scattering transformation between robot $i$ and robot $j$.}
	 \label{fig:scattering}
\end{figure}
%
%
\subsection{Scattering Transformation and Passivity of Communication Block}

In this subsection, we present a novel control architecture based on \cite{HCFS15}.
Hereafter, similarly to \cite{HCF15,HCYF17}, we assume that reference $q_r$ is constant, which means that
we are assuming that these signals are varying so slowly 
that the analyses under the constant reference are applicable to
the actual system.
Time varying reference is also addressed in \cite{HCYF17}, 
however its extension to the case with delays is left as a future work.

Following the architecture of \cite{HCFS15},
we let each agent exchange the variable $p_{ij}$ through scattering transformation instead 
of directly sending $q_{i}$ and $\xi_{i}$ as illustrated
in Fig. \ref{fig:roboti}.
In the present case, the scattering variables are defined as
\begin{align}
s^{+}_{ij} 
={}&
\frac{1}{\sqrt{2\sigma}}(-p_{ij} \!+\! \sigma r_{ij}), \
s^{-}_{ij} = \frac{1}{\sqrt{2\sigma}}(-p_{ij} \!-\! \sigma r_{ij}), \label{eq:s1} \\
s^{+}_{ji} 
={}&
\frac{1}{\sqrt{2\sigma}}(p_{ji} + \sigma r_{ji}),\hspace{2ex}
s^{-}_{ji} = \frac{1}{\sqrt{2\sigma}}(p_{ji} - \sigma r_{ji}), \label{eq:s2}
\end{align}
where $r_{ij} = [(r^{q}_{ij})^{T}\ (r^{\xi}_{ij})^{T}]^{T}$ 
and $\sigma > 0$ is constant.
The operation is illustrated in Fig. \ref{fig:scattering}.
It is immediate to see from this figure that
\begin{eqnarray}
s^{+}_{ji}(t) = s^{+}_{ij}(t - T) \label{eq:st1}, \ 
s^{-}_{ij}(t) = s^{-}_{ji}(t - T) \label{eq:st2},
\end{eqnarray}
where $T > 0$ is a delay in communication channel from $i$-th robot 
to $j$-th robot and it is assumed to be constant.
Hereafter, we suppose that both of $s^{+}_{ji}$ and $s^{-}_{ij}$
are equal to zero during the negative time $t < 0$ for all $i,j$.

We define $s_{q} := [(\sqrt{\sigma/2})q_{r}^{T}\ 0^{T}]^{T} \in \R^{4}$ and the storage function for the communication channel as
\begin{align*}
S^{c}_{ij} 
:=& 
\frac{1}{2} \int^{t}_{t - T} \|s^{+}_{ij} - s_{q}\|^{2} d\tau
+ \frac{1}{2} \int^{t}_{t - T} \|s^{-}_{ji} + s_{q}\|^{2} d\tau.
\end{align*}
Thus, if we define $\bar{r}^{q}_{ij} := r^{q}_{ij} - q_{r}$ and 
$\bar{r}_{ij} := [(\bar{r}^{q}_{ij})^{T}\ (r^{\xi}_{ij})^{T}]^{T}$, 
the time derivative of $S^{c}_{ij}$ is given by
\begin{align}
\label{eq:Vijdot}
\dot{S}^{c}_{ij} 
=& 
\frac{1}{4\sigma} (\|-p_{ij} + \sigma \bar{r}_{ij}\|^{2} - \|p_{ji} + \sigma \bar{r}_{ji}\|^{2} 
 + \|p_{ji} - \sigma \bar{r}_{ji}\|^{2} \nonumber \\
& - \|-p_{ij} - \sigma \bar{r}_{ij}\|^{2})  
= - p_{ij}^{T}\bar{r}_{ij} - p_{ji}^{T}\bar{r}_{ji}.
\end{align}
Thus, if we interpret $[-p_{ij}^{T}\ -p_{ji}^{T}]^{T}$ and 
$[\bar{r}_{ij}^{T}\ \bar{r}_{ji}^{T}]^{T}$ as the input and output respectively,
the communication channel becomes passive with respect to $S^{c}_{ij}$.

\subsection{Passivity of Robotic Network}

In this subsection, we show passivity of the robotic network discussed above.
First, we define the signals $\bar{q}_{i} := q_{i} - q_{r}$, for all $i$
and $\bar{z} := z - q_{r}$.
Then, the error dynamics on (\ref{eq:xii}) and (\ref{eq:uri})
 is given by
\begin{align}
\dot{\bar{q}}_{i} 
&= 
\sum_{j \in \N_{i}}a_{ij}(\bar{r}^{q}_{ij} - \bar{q}_{i}) 
- \sum_{j \in \N_{i}}b_{ij}(r^{\xi}_{ij} - \xi_{i}) + \delta_{i} u_{h}\label{eq:errorqi}\\
\dot{\xi}_{i} 
&= 
\sum_{j \in \N_{i}}b_{ij}(\bar{r}^{q}_{ij} - \bar{q}_{i}), \label{eq:errorxii}
\end{align}
where (\ref{eq:errorqi}) is hold because $q_{r}$ is constant.
From (\ref{eq:errorqi}) and (\ref{eq:errorxii}), 
(\ref{eq:separate}) is rewritten as
\begin{align}
\label{eq:errorseparate}
\left[
\begin{array}{c}
\dot{\bar{q}}_{i} \\
\dot{\xi}_{i}
\end{array}
\right]
= 
\mu_{i} 
+ 
\left[
\begin{array}{c}
\delta_{i} \\
0
\end{array}
\right] u_{h}.
\end{align}
Then, we define the storage function for the error dynamics as
\begin{align}
\bar{S}_{i} 
:=
\frac{1}{2}\|\bar{q}_{i}\|^{2} + \frac{1}{2}\|\xi_{i}\|^{2},\ i \in \V.
\end{align}
From (\ref{eq:rijvij}), the time derivative of $\bar{S}_{i}$ 
along the trajectory of (\ref{eq:errorseparate}) is given by
\begin{align}
\dot{\bar{S}}_{i} 
={}& \left[
\bar{q}_{i}^{T}\
\xi_{i}^{T}
\right] \sum_{j \in \N_{i}}p_{ij} 
+ \bar{q}_{i}^{T} \delta_{i} u_{h} \nonumber \\
={}& 
- \sum_{j \in \N_{i}}a_{ij}\|q_{i} - r^{q}_{ij}\|^{2}
+ \sum_{j \in \N_{i}}
\bar{r}_{ij}^{T} p_{ij}
+ \bar{q}_{i}^{T} \delta_{i} u_{h}. \label{eq:Sidot}
\end{align}
Hence, we define a total storage function for the robotic network as
\begin{eqnarray}
\label{eq:totalstrfunc}
S 
:= 
\frac{1}{m}\sum_{i \in \V}\bar{S}_{i}
+ \frac{1}{m}\sum_{(i, j) \in \E} S^{c}_{ij}.
\end{eqnarray}
From (\ref{eq:Sidot}) and (\ref{eq:Vijdot}), 
the time derivative of $S$ is given by
\begin{align*}
\dot{S}
=& 
\frac{1}{m}\sum_{i \in \V} \dot{\bar{S}}_{i} + 
\frac{1}{m}\sum_{(i,j) \in \E} \dot{S}^{c}_{ij} \nonumber \\
=& 
- \frac{1}{m}\sum_{i \in \V}\sum_{j \in \N_{i}}a_{ij}\|q_{i} - r^{q}_{ij}\|^{2} 
+ \bar{z}^{T}u_{h} \leq \bar{z}^{T}u_{h}.
\end{align*}
Thus, the robotic network is passive from human input $u_{h}$ 
to the average position of accessible robots $\bar{z}$ 
with respect to (\ref{eq:totalstrfunc}).
In the next section, 
we close the robotic network with a human operator based on 
this passive property.

\begin{figure}[t]
	\centering
	\includegraphics[width=70mm,clip]{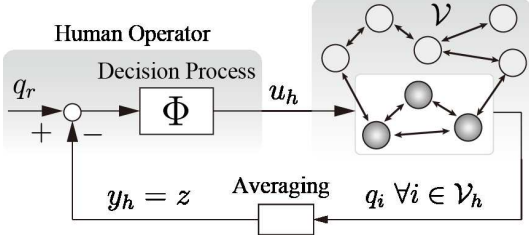}
	\caption{Block diagram of the human-robotic network system.}
	\label{fig:blockposition}
\end{figure}
%
%
%
\section{Interconnected system of Robotic Network and Passive Human Operator}
\label{sec:main}

In this section, we prove position synchronization (\ref{eq:goal1})
by utilizing passivity.

\subsection{Synchronization to Human's Desired Position}

First, we start by the description of a human operator.
Similarly to \cite{HCF15,HCYF17}, we assume that the operator determines
the command $u_{h}$ based on the error $q_{r} - y_{h}$ between the 
reference $q_{r}$ and the feedback information $y_{h} = z$.
Then, the entire system is illustrated as Fig. \ref{fig:blockposition},
where the human decision process is denoted by $\Phi$.
Then, we assume that a human operator's decision process $\Phi$ 
is input strictly passive from $q_{r} - z$ to 
$u_{h}$, i.e., there exists a storage function 
$S_h^{\Phi}$ and $\epsilon > 0$ such that
\begin{align}
\dot S_h^{\Phi} \leq
(q_{r} \!-\! z)^{T}u_{h}
- \epsilon \|q_{r} \!-\! z\|^{2}.
\end{align}
The validity of this assumption is examined in \cite{HCF15,HCYF17} 
using experimental data obtained from a human-in-the-loop simulator,
where it is confirmed that the statement is true over a prescribed frequency domain.
We need to analyze the effect of the communication delays on the human property
but this exceeds the scope of this paper and we leave the issue as a future work.


From this human passivity assumption, we have the following result.
\begin{proposition}
\label{the:1}
Consider the system (\ref{eq:errorqi}) and (\ref{eq:errorxii}),
and the scattering transformation (\ref{eq:s1}) and (\ref{eq:s2}) 
for all $j \in \N_{i}$ and all $i$.
Then, the feedback system achieves the condition (\ref{eq:goal1})
if the communication graph is fixed, undirected and connected, 
and the human decision process $\Phi$ is input strictly passive.
\end{proposition}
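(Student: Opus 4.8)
The plan is to close the loop between the two passive subsystems already exhibited in the paper and then extract convergence from the resulting energy decay. Concretely, I would combine the total storage function $S$ of the robotic network in (\ref{eq:totalstrfunc}), for which it was shown that $\dot S \le -\frac{1}{m}\sum_{i\in\V}\sum_{j\in\N_i} a_{ij}\|q_i - r^{q}_{ij}\|^2 + \bar z^T u_h$, with the human storage $S_h^{\Phi}$ obeying $\dot S_h^{\Phi} \le (q_r - z)^T u_h - \epsilon\|q_r - z\|^2$. Defining the composite functional $W := S + S_h^{\Phi} \ge 0$ and using $q_r - z = -\bar z$, the cross terms $\bar z^T u_h$ cancel, leaving
\[
\dot W \le -\frac{1}{m}\sum_{i\in\V}\sum_{j\in\N_i} a_{ij}\|q_i - r^{q}_{ij}\|^2 - \epsilon\|\bar z\|^2 \le 0 .
\]
Hence $W$ is nonincreasing and bounded below, so it converges; this makes $\bar q_i$, $\xi_i$, the scattering integrals in $S^{c}_{ij}$, and $S_h^{\Phi}$ all bounded, and it renders the two nonnegative terms on the right-hand side integrable on $[0,\infty)$.

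Next I would promote this $L_2$-integrability to genuine convergence. Using the bounds above together with the error dynamics (\ref{eq:errorqi})--(\ref{eq:errorxii}) and the channel relations (\ref{eq:st1}), I would verify that $\dot q_i$, $\dot\xi_i$, $\dot r^{q}_{ij}$ and $\dot{\bar z}$ are bounded, so that $\|q_i - r^{q}_{ij}\|^2$ and $\|\bar z\|^2$ are uniformly continuous. Barbalat's lemma then yields $\lim_{t\to\infty}\|q_i - r^{q}_{ij}\| = 0$ for all $(i,j)\in\E$ and $\lim_{t\to\infty}\|\bar z\| = 0$. The first limit says that each robot's received estimate of a neighbour tends to its own position, while the second says the accessible-robot average $z$ tends to $q_r$.

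The decisive step is to convert $q_i - r^{q}_{ij}\to 0$ into genuine inter-robot synchronization, and here I would exploit the scattering encoding (\ref{eq:s1})--(\ref{eq:s2}). Once the signals settle to constant values, the delay $T$ acts transparently, so the wave relations (\ref{eq:st1}) reduce to $s^{+}_{ji}=s^{+}_{ij}$ and $s^{-}_{ij}=s^{-}_{ji}$; solving these forces $r_{ij} = r_{ji}$ (in particular $r^{q}_{ij} = r^{q}_{ji}$) and $p_{ij} = -p_{ji}$ on every edge. Combining $r^{q}_{ij} = r^{q}_{ji}$ with the two limits $r^{q}_{ij}\to q_i$ and $r^{q}_{ji}\to q_j$ gives $q_i = q_j$ for each $(i,j)\in\E$; connectivity of $G$ then propagates this to a single common limit $q^{\star}$ for all robots. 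Finally, since $z = \frac{1}{m}\sum_{i\in\V_{h}} q_i \to q^{\star}$ while also $z\to q_r$, we obtain $q^{\star} = q_r$, which is exactly $\lim_{t\to\infty}\|q_i - q_r\| = 0$ for all $i$, i.e. (\ref{eq:goal1}).

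I expect the main obstacle to be the delay itself: because the closed loop is a functional (delay) differential system, LaSalle's invariance principle does not apply verbatim, so the argument must either be routed through a Krasovskii--LaSalle invariance principle for delay systems or, as sketched above, through Barbalat's lemma after verifying uniform continuity of the error signals. The accompanying technical nuisance is justifying the passage to the constant steady state needed to invoke the scattering relations (\ref{eq:st1}); this is where boundedness of the derivatives of the delayed signals $r^{q}_{ij}$ and well-posedness of the infinite-dimensional state have to be handled with care rather than assumed.
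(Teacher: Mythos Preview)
Your energy argument matches the paper's: combine $S$ from (\ref{eq:totalstrfunc}) with the human storage, observe that the cross terms $\bar z^{T} u_h$ cancel, and obtain a nonincreasing functional whose dissipation is $-\tfrac{1}{m}\sum_{i}\sum_{j\in\N_i}a_{ij}\|q_i-r^{q}_{ij}\|^{2}-\epsilon\|\bar z\|^{2}$. The paper invokes the Krasovskii--LaSalle principle for delay systems rather than Barbalat, but you already list this as an option, and either route delivers $q_i - r^{q}_{ij}\to 0$ for all $(i,j)\in\E$ and $\bar z\to 0$.

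The genuine gap is in your ``decisive step.'' You write that ``once the signals settle to constant values'' the delay acts transparently and the wave relations collapse to $r_{ij}=r_{ji}$. But neither Barbalat nor LaSalle gives you that $q_i$ (or $r^{q}_{ij}$) converges to a constant; you only know that the \emph{differences} $q_i-r^{q}_{ij}$ and $\bar z$ vanish. Labelling this a ``technical nuisance'' understates it: assuming convergence to a steady state in order to deduce convergence is circular, and without constancy the reduction $s^{+}_{ji}(t)=s^{+}_{ij}(t)$ simply does not hold.

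The paper closes this gap by algebraically unwinding the scattering relations for the \emph{time-varying} signals rather than passing to a steady state. From (\ref{eq:s1})--(\ref{eq:st2}) it extracts explicit delayed identities of the form $r^{q}_{ij}(t)=r^{q}_{ji}(t-T)+(\text{terms in }r^{\xi}\text{ and }e_{ij})$, where the residual $e_{ij}$ is built out of $r^{q}_{ij}-q_i$ and $r^{q}_{ji}(t-T)-q_j(t-T)$ and therefore tends to zero by the first limit you already have. Combining the $(i,j)$ and $(j,i)$ identities at shifted times then yields
\[
r^{q}_{ij}(t)+r^{q}_{ij}(t-2T)-2\,r^{q}_{ji}(t-T)\;\to\;0,
\]
and comparing this relation across two neighbours $j,k\in\N_i$, together with $r^{q}_{ij}-r^{q}_{ik}\to 0$, forces $r^{q}_{ji}-r^{q}_{ki}\to 0$ and hence $q_j-q_k\to 0$. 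Connectivity and $\bar z\to 0$ then finish as you indicate. This manipulation of the delayed scattering equations---not a steady-state reduction---is the missing idea in your sketch.
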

\begin{proof}
First, we define an energy function $U$ as
\begin{align}
U
:= S +
\int^{t}_{0}\!
\left(
-\bar{z}(\tau)u_{h}(\tau) \!-\! \epsilon \|\bar{z}(\tau)\|^{2} 
\right) d\tau + \beta,
\end{align}
where $\beta$ is a positive constant.
From the passivity of robotic network shown in Section \ref{sec:hrn} 
and human operator's passivity assumed above, we obtain
\begin{align}
\dot{U}
&=
- \frac{1}{m}\sum_{i \in \V}\sum_{j \in \N_{i}}a_{ij}\|q_{i} - r^{q}_{ij}\|^{2} 
+ \bar{z}^{T}u_{h} - \bar{z}^{T}u_{h} - \epsilon \|\bar{z}\|^{2} \nonumber \\
&= 
- \frac{1}{m}\sum_{i \in \V}\sum_{j \in \N_{i}}a_{ij}\|q_{i} - r^{q}_{ij}\|^{2} 
- \epsilon \|\bar{z}\|^{2} \leq 0. \label{eq:deriv}
\end{align}
Thus, it is guaranteed that 
the all states are bounded in spite of communication delays.

Let us define $x^{i}_{t}$ such that 
$x^{i}_{t}(\theta) = x_{i}(t+\theta)$ for $\theta \in [-T, 0]$.
Then, the LaSalle's invariance principle for time delay systems \cite{HL93}
is applicable and any solution $x_{t}$ of the system converges to
the largest invariant set in the set of functions 
satisfying $\dot{U} \equiv 0$. 
Thus, $\dot{U} = 0$ means 
$q_{i} \equiv r^{q}_{ij}\ (i, j) \in \E$ 
and $\bar{z} \equiv 0$.
Hence, we can conclude as follows.
\begin{align}
&\lim_{t \to \infty}(r^{q}_{ij} - q_{i})
= 0\quad (i, j) \in \E \label{eq:rqconv} \\
&\lim_{t \to \infty}\bar{z} = 0 \label{eq:zqconv}
\end{align}
Next, in order to analyze further, we need to follow the 
behavior of $\bar{r}_{ij}$.
First, we define $e_{ij}$ and $e_{ji}$ as

\begin{align*}
e_{ij}
:=& 
- \frac{1}{\sigma}\{(r^{q}_{ij}-q_{i}) + (r^{q}_{ji}(t-T)-q_{j}(t-T))\}, \\
e_{ji}
:=& 
- \frac{1}{\sigma}\{(r^{q}_{ji}-q_{j}) + (r^{q}_{ij}(t-T)-q_{i}(t-T))\}.
\end{align*}
Then, each element of (\ref{eq:s1})--(\ref{eq:st2}) can be given as
\begin{align}
r^{q}_{ij} 
={}& 
r^{q}_{ji}(t-T) 
+ \frac{1}{\sigma}
\{ b_{ij}(r^{\xi}_{ij}-\xi_{i}) \nonumber \\
& +  b_{ji}(r^{\xi}_{ji}(t-T)-\xi_{j}(t-T))\}
+ e_{ij},  \label{eq:rqij} \\
r^{q}_{ji} 
={}& 
r^{q}_{ij}(t-T) 
+ \frac{1}{\sigma}
\{ b_{ji}(r^{\xi}_{ji}-\xi_{j}) \nonumber \\
& + b_{ij}(r^{\xi}_{ij}(t-T)-\xi_{i}(t-T))\}
+ e_{ji}.  \label{eq:rqji} \\
r^{\xi}_{ij} 
={}& 
r^{\xi}_{ji}(t-T) 
+ b_{ij}e_{ij}, \\
r^{\xi}_{ji} 
={}&
r^{\xi}_{ij}(t-T) 
+ b_{ji}e_{ji}, \label{eq:rxiij} 
\end{align}
Furthermore, subtracting (\ref{eq:rqji}) at time $t - T$ 
from (\ref{eq:rqij}) yields
\begin{align}
r^{q}_{ij} + r^{q}_{ij}(t-2T) 
={}&
2 r^{q}_{ji}(t-T)
- \frac{b_{ij}}{\sigma}\{ (r^{\xi}_{ij}-\xi_{i}) \nonumber \\
& - (r^{\xi}_{ij}(t \!-\! 2T)\!-\! \xi_{i}(t \!-\! 2T))\} \nonumber \\
& - a_{ij}e_{ij} + a_{ji}e_{ji}(t-T).
\label{eq:rijrij}
\end{align}
Utilizing the equations in (\ref{eq:rxiij}), we have
$r^{\xi}_{ij} 
= 
r^{\xi}_{ij}(t-2T) 
+ e_{ij} + e_{ji}(t-T)$.
Now, according to (\ref{eq:rqconv}), the signals $e_{ij}$ and $e_{ji}$ 
converge to $0$, i.e., 
\begin{eqnarray}
\label{eq:dijdji}
\lim_{t \to \infty}e_{ij} = \lim_{t \to \infty}e_{ji} = 0 
\end{eqnarray}
holds. 
Thus, taking the limit of (\ref{eq:rijrij}) yields
\begin{align}
\lim_{t \to \infty}(r^{q}_{ij} + r^{q}_{ij}(t-2T) - 2 r^{q}_{ji}(t-T)) = 0.
\label{eq:rijrij2}
\end{align}
The same equation holds for $k \in \N_{i}$ as
\begin{align}
\lim_{t \to \infty}(r^{q}_{ik} + r^{q}_{ik}(t-2T) - 2 r^{q}_{ki}(t-T)) = 0.
\label{eq:rikrik2}
\end{align}
On the other hand, because the signal $r_{ij}^{q}$ for all $j \in \N_{i}$ converges  
to $q_{i}$ from (\ref{eq:rqconv}), we obtain
\begin{align}
\label{eq:rijrik}
\lim_{t \to \infty}(r^{q}_{ij} - r^{q}_{ik}) = 0\quad \forall{j, k} \in \N_{i}.
\end{align}
Subtracting (\ref{eq:rikrik2}) from (\ref{eq:rijrij2}) 
and using (\ref{eq:rijrik}),
we have
\begin{align}
\label{eq:rjirki}
\lim_{t \to \infty}(r^{q}_{ji} - r^{q}_{ki}) = 0.
\end{align}
Because of (\ref{eq:rqconv}), equation (\ref{eq:rjirki}) implies $\lim_{t \to \infty}(q_{j} - q_{k}) = 0$, which holds for $(j, k) \in \E$.
Then, we have $\lim_{t \to \infty}(q_{i} - q_{j}) = 0$ for all $i, j \in \V$.
Furthermore, from (\ref{eq:zqconv}), we have
\begin{align}
\lim_{t \to \infty}\bar{z}
=
\lim_{t \to \infty} \left(
\frac{1}{m}\sum_{i \in \V_{h}}q_{i} - q_{r} 
\right) = 0.
\end{align}
Therefore, we can conclude that $\lim_{t \to \infty}(q_{i} - q_{r}) = 0$ 
for all $i \in \V$.
This completes the proof.
\end{proof}
\begin{remark}
As in the same way, the velocity synchronization (\ref{eq:goal2}) is achieved 
with the same control architecture.
However, we omit the result because of space limitations. 
\end{remark}

\begin{figure}[t]
	\centering
	\includegraphics[width=87mm,clip]{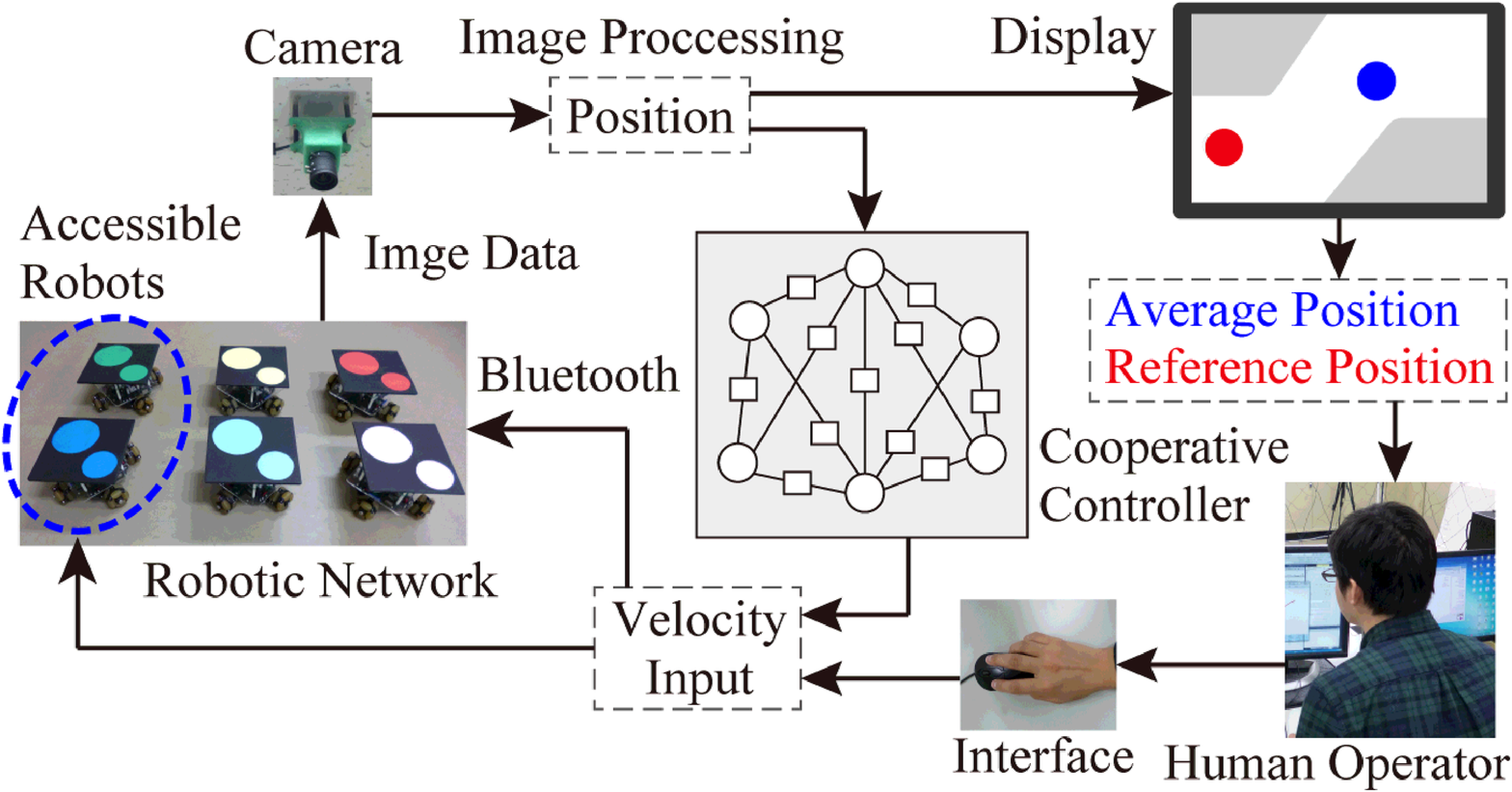}
	\caption{The architecture overview of experiment system.}
	 \label{fig:expSys}
\vspace{2ex}
	\centering
	\includegraphics[width=70mm,clip]{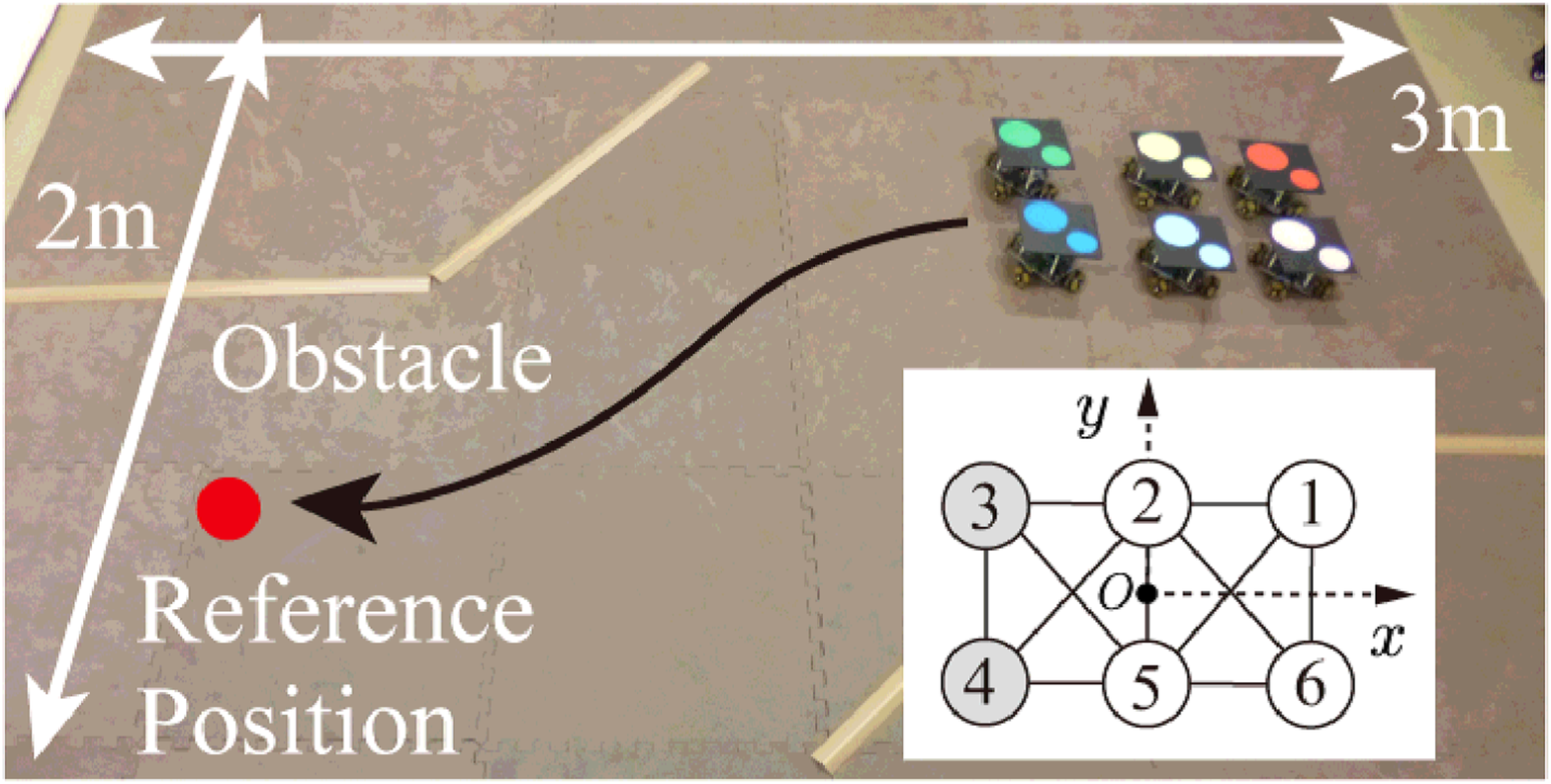}
	\caption{The robotic network and obstacles on the field.
	The communication graph is shown in right bottom.
	The accessible robots are in grey, i.e., 3rd and 4th robots.}
	 \label{fig:expField}
\end{figure}

%
\section{Experiment}
\label{sec:experiment}

In this section, we show the experiment results of the discussed control system.
In this paper, we focus on the influence of the communication delay
on the human operator's behavior.  
For that purpose, we conduct two types of experiments, without inter-robot communication delay and  
with inter-robot communication delay.
Then, we investigate the differences by comparing the robots' trajectories and human input.
Although it is also important to investigate the human passivity assumption, 
as already mentioned in Section \ref{sec:main},
we leave this as a future work.

\begin{figure}[t]
   \centering
 	\includegraphics[width=55mm]{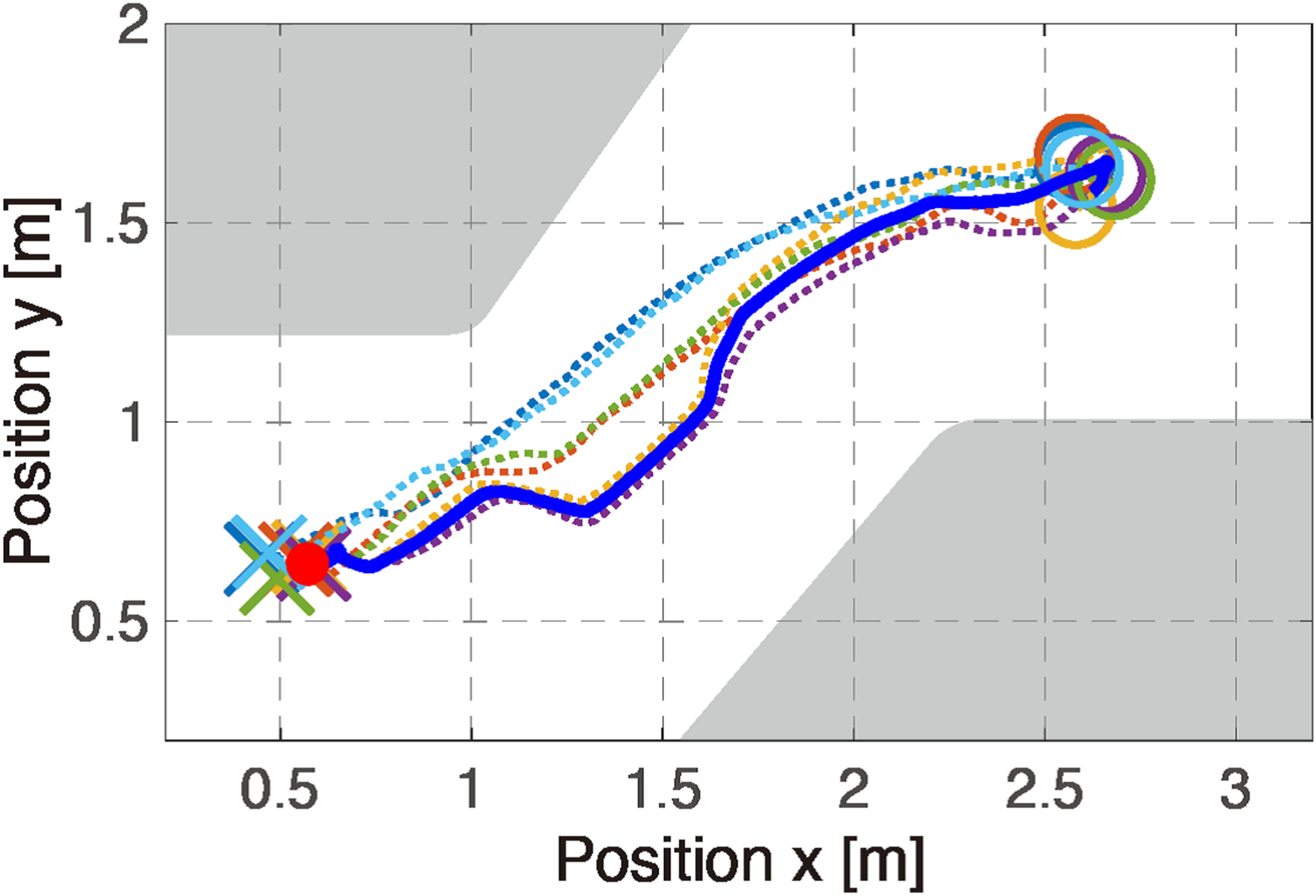}
 \begin{minipage}{0.49\hsize}
  \centering
   \includegraphics[width=47mm]{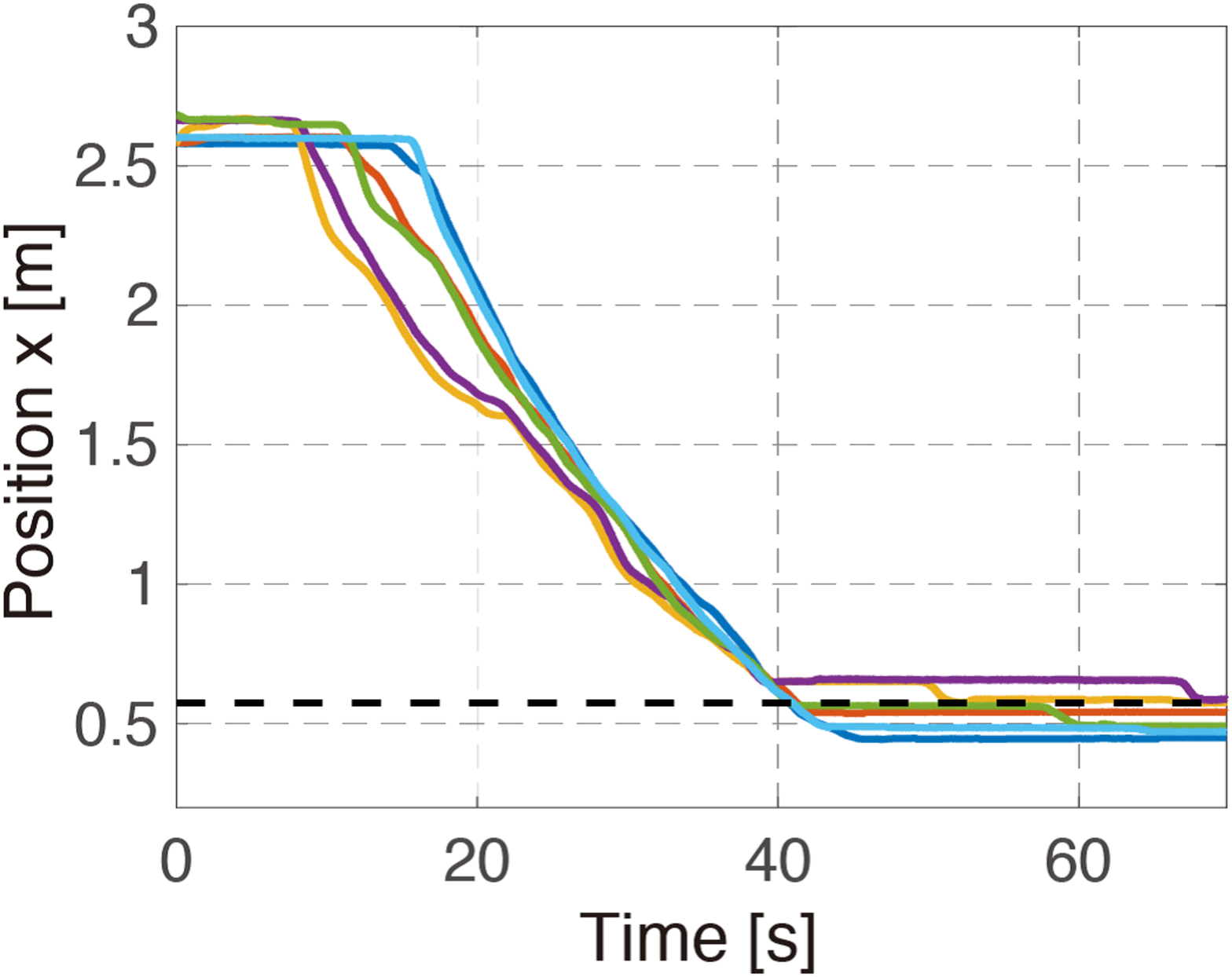}
 \end{minipage}
 \begin{minipage}{0.49\hsize}
  \centering
   \includegraphics[width=47mm]{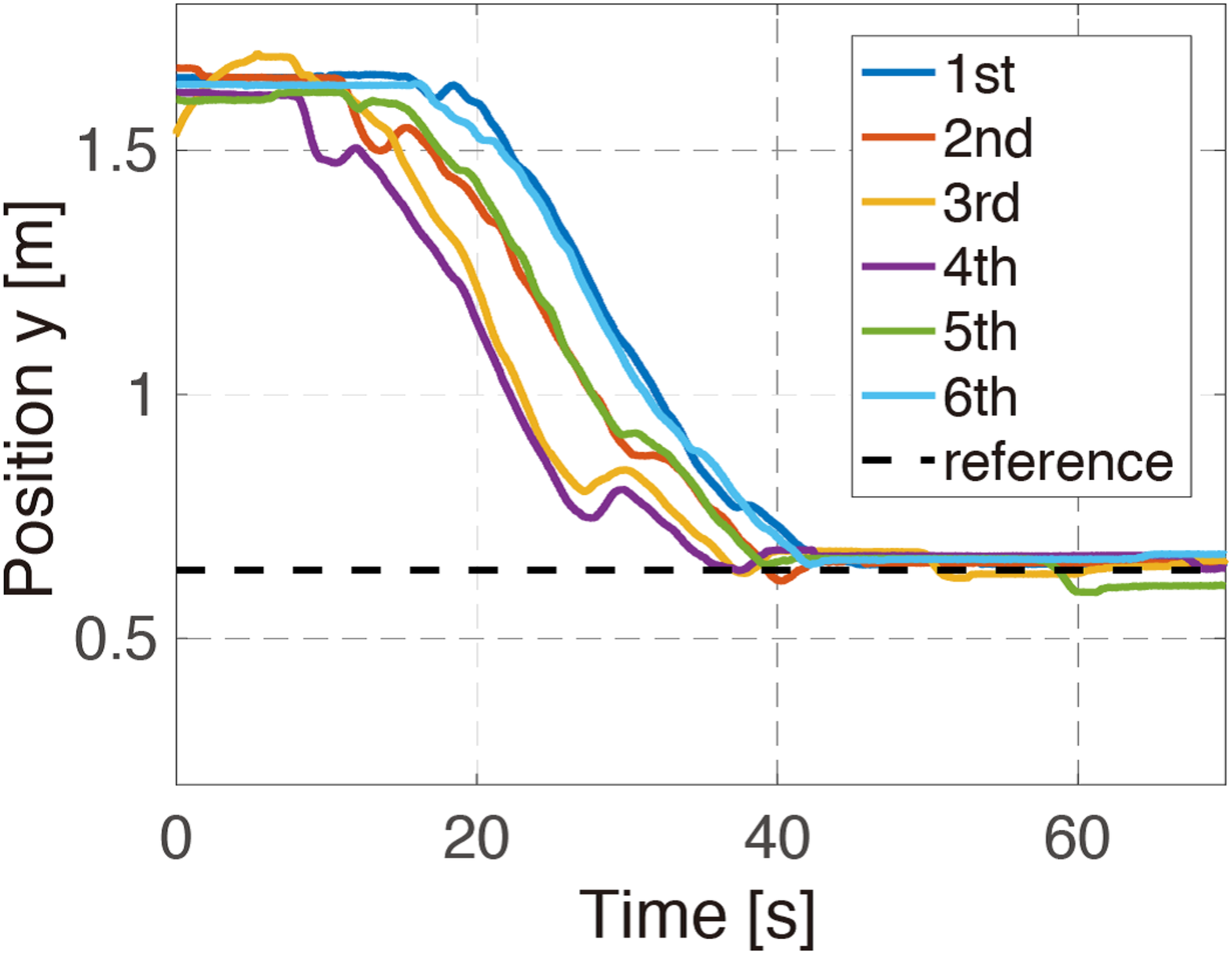}
 \end{minipage}
   \centering
 	\includegraphics[width=67mm]{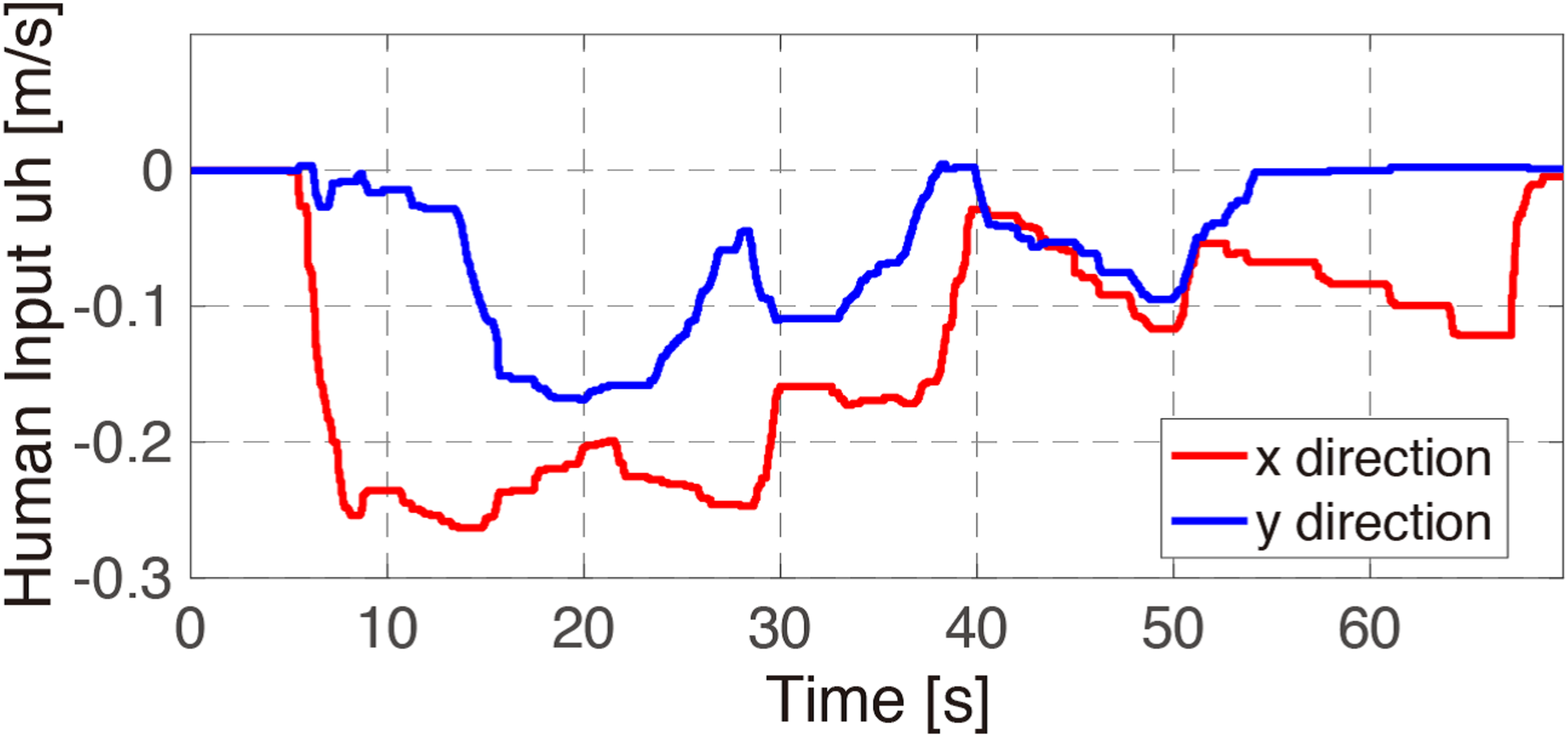}
   \caption{The experiment result without inter-robot communication delay.
   			(top) The trajectories of all robots. 
			The average position of accessible robots are shown by blue curve,
			and each robot's initial and last positions are shown by circle 
			and cross, respectively.
			The reference position is shown by red dot.
			(middle) The trajectories in x-axis and y-axis.
			(bottom) Human input.}
  \label{fig:expResult1}
\end{figure}
\begin{figure}[t]
   \centering
 	\includegraphics[width=55mm]{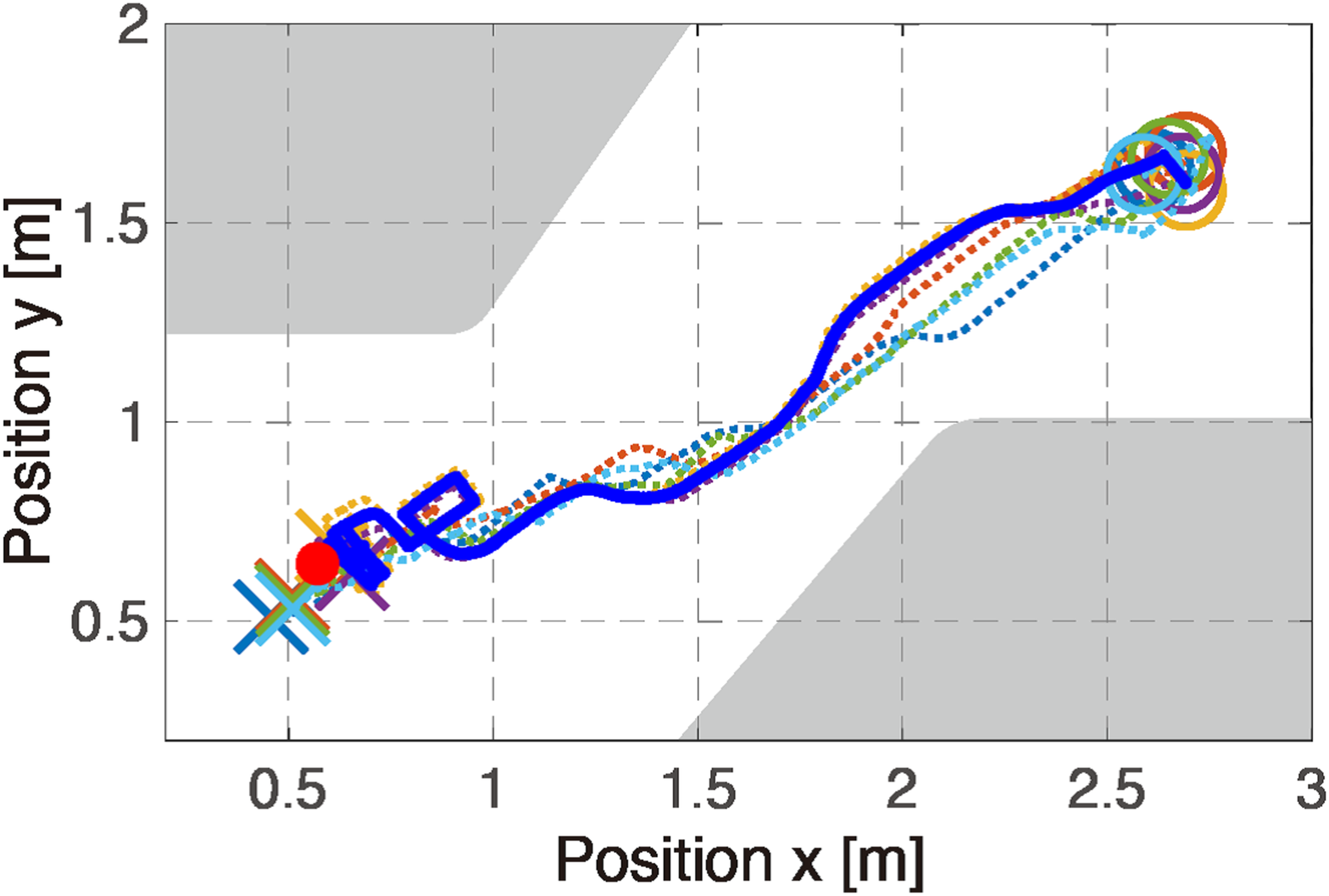}
 \begin{minipage}{0.49\hsize}
  \centering
   \includegraphics[width=47mm]{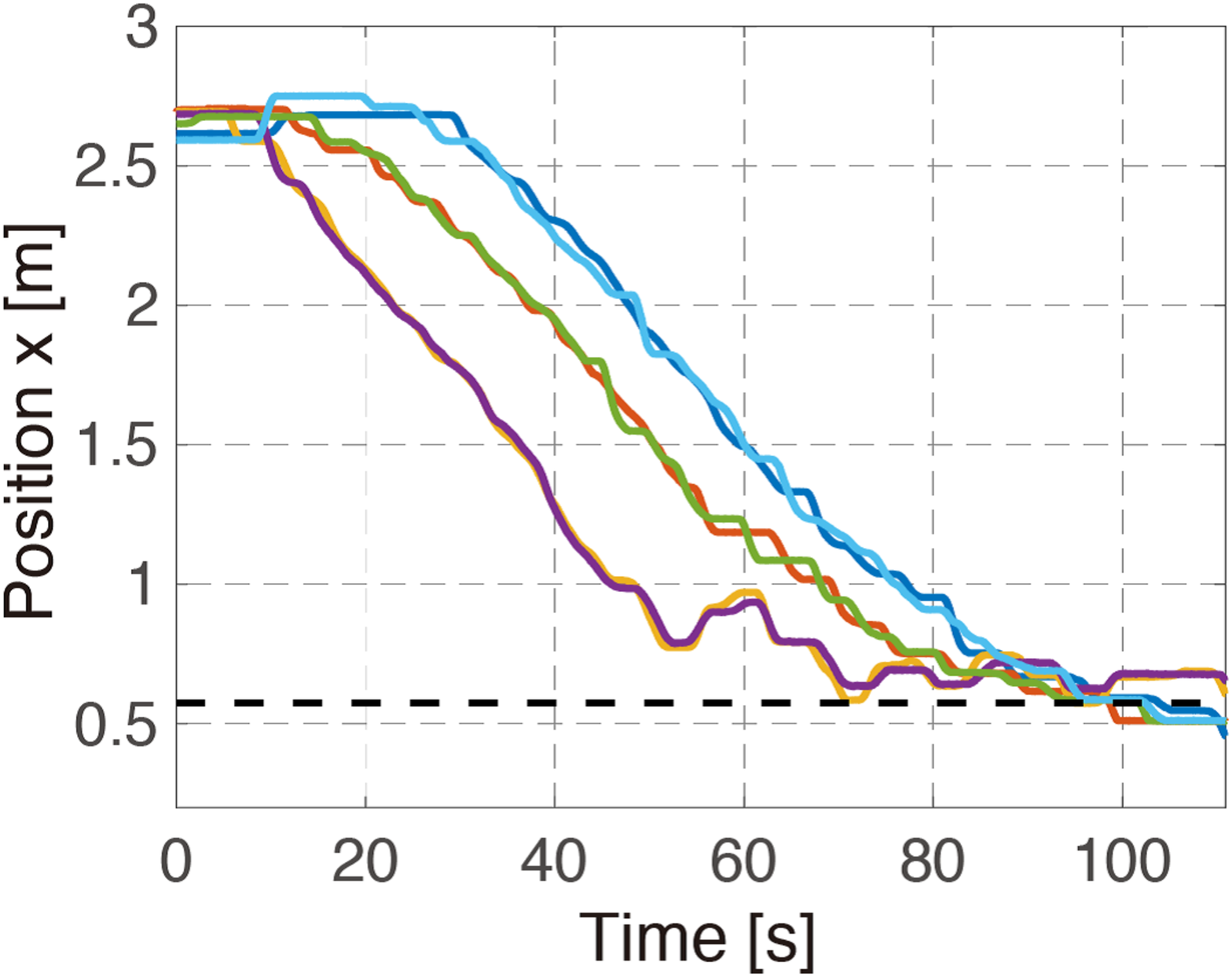}
 \end{minipage}
 \begin{minipage}{0.49\hsize}
  \centering
   \includegraphics[width=47mm]{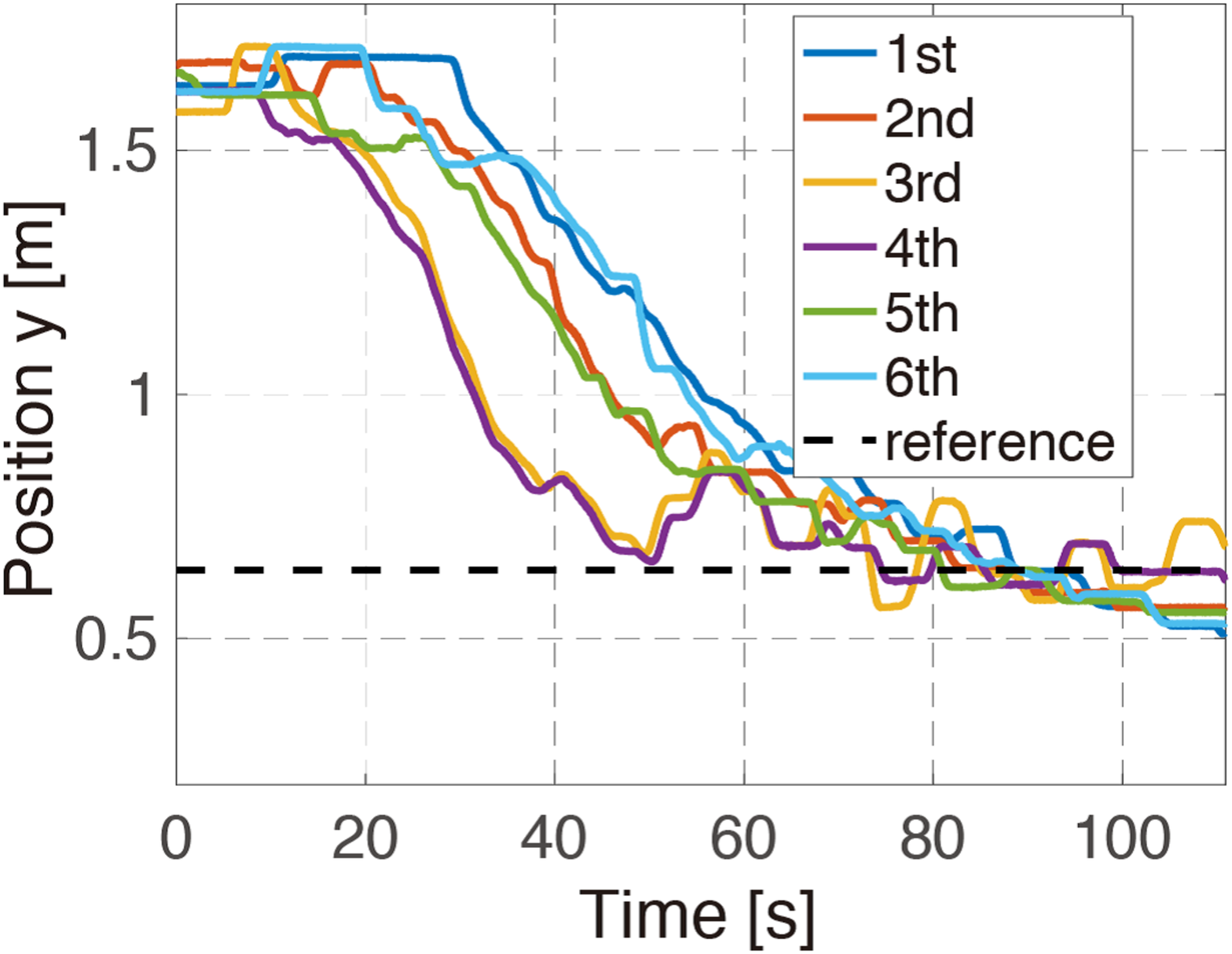}
 \end{minipage}
   \centering
 	\includegraphics[width=67mm]{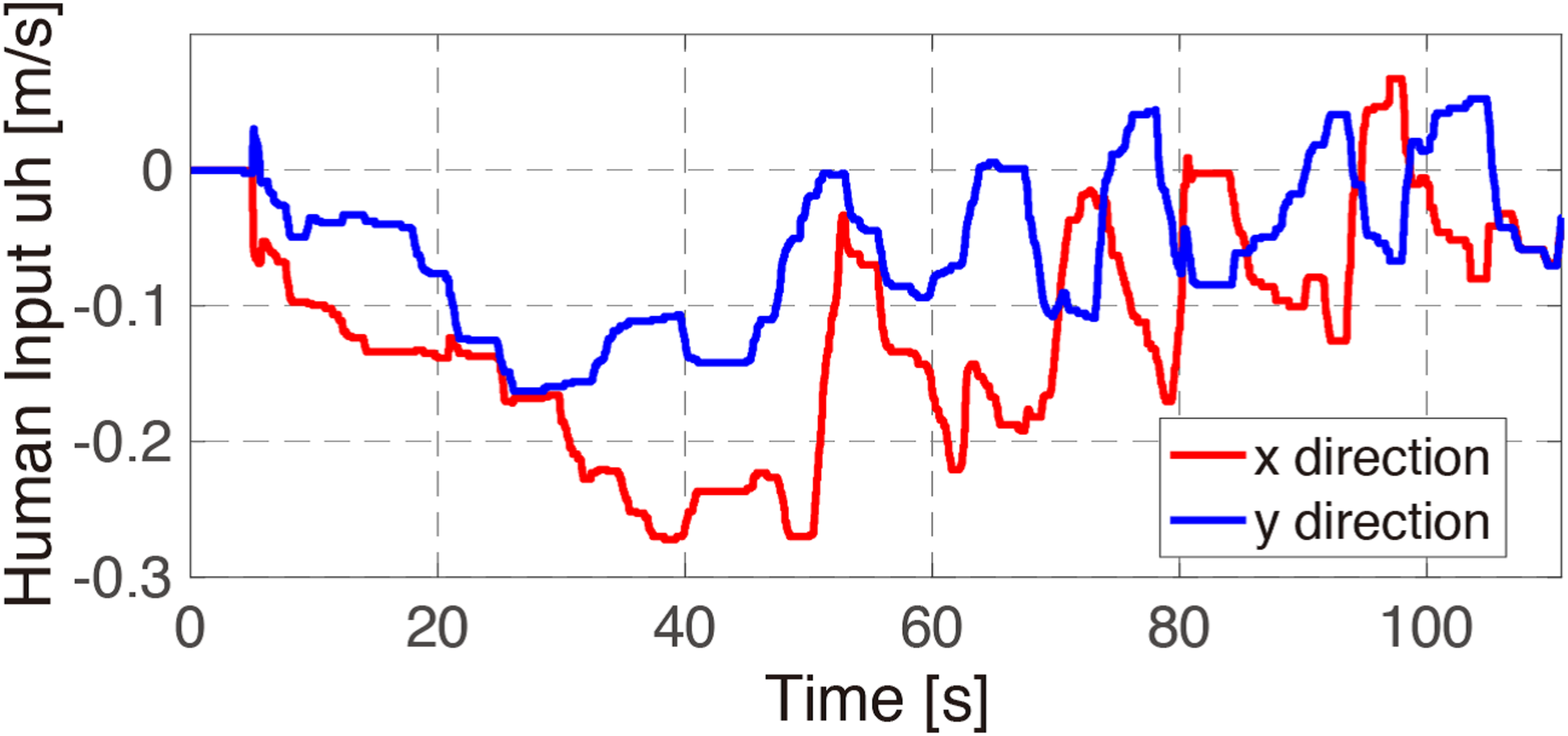}
	\caption{The experiment result with inter-robot communication delay and 
	scattering transformation.
	(top) The trajectories of all robots. 
	The average position of accessible robots are shown by blue curve,
	and each robot's initial and last positions are shown by circle 
	and cross, respectively.
	The reference position is shown by red dot.
	(middle) The trajectories in x-axis and y-axis.
	(bottom) Human input.}
  \label{fig:expResult2}
\end{figure}
\subsection{Experimental Setting}
In this subsection, we introduce our experiment system and 
the intended scenario.
The architecture of experiment system is illustrated in Fig. \ref{fig:expSys}.
We use 6 omnidirectional robots named TDO48 from TOSADENSHI Inc.
and obtain image data 
by using a ceiling camera named Firefly MV from Point Grey. 
From image processing by C++, we obtain all robots positions.
Even though the control architecture is implementable by each robot processor, 
we artificially implement all cooperative controller 
in a computer for simplicity.
The inter-robot communication delays are also generated in Simulink  
and the delays set 0.5s.
Simulink is running on dSPACE for the real time implementation. 
On the other hand, we show the average position
of accessible robots and given reference position via GUI programmed by C++.
Then, the participant inputs velocity command through mouse.
Finally, the velocity inputs are converted to motor angular velocities, 
and then, transmitted to each robot via Bluetooth.

For the scenario, we ask the participant to control robotic network to a given position,
as shown by red dot in Fig. \ref{fig:expField}.
The reference position is $q_{r} = [0.55\ 0.60]^{T}$.
Furthermore, we add the obstacles which the participant have 
to avoid collision with them.
In order to avoid the collisions between each robot, 
we utilize the biases $d_{i}\ \forall{i} \in \V$ 
and denote real position $\eta_{i}$ as 
$\eta_{i} = q_{i} + d_{i}\ \forall{i} \in \V$.
The biases are given as
$d_{1} = [0.35\ 0.175]^{T}, d_{2} = [0\ 0.175]^{T}, d_{3} = [-0.35\ 0.175]^{T}, 
d_{4} = [-0.35\ -0.175]^{T}, d_{5} = [0\ -0.175]^{T},  
d_{6} = [0.35\ -0.175]^{T}$.
During this section, we show all figures using the biased positions,
rather than the real positions $\eta_{i}$.
The communication graph of the robotic network is shown in Fig. \ref{fig:expField}. 
The initial positions of all robots are $q_{i}(0) = [2.6\ 1.6]^{T}$ and 
$\xi_{i}(0) = 0\ \forall{i} \in \V$.
The parameters are set as $a_{ij} = 0.2,\ b_{ij} = 0.05$ and 
$\sigma = 1.0$.


\subsection{Experimental Results}

First, the result without inter-robot communication delay is shown
in Fig. \ref{fig:expResult1}.
The human operator successfully control the accessible robots, 
and all robots reached the target position.
Note that non-accessible robots could close the distances with the accessible robots around
30s while the accessible robots are still moving.
This shows one of the advantage in exchanging the integral value, $\xi_{i}$,
of $-\sum_{j \in \N_{i}}b_{ij}(q_{i} - q_{j})$.
The human input is shown in the bottom part of Fig. \ref{fig:expResult1}.
The result shows that the human operator smoothly controlled the robotic network.

Next, we show the result with inter-robot communication delay and 
the scattering transformation in Fig. \ref{fig:expResult2}.
In the same way as the delay free case, 
all robots reached the target position.
However, comparing the trajectories with the result without delay,
non-accessible robots couldn't close the distances with the accessible robots.
As a result, the accessible robots moved back and forth around 
the reference position.
The arrival time is delayed by about 70s.
The human input is shown in the bottom of Fig. \ref{fig:expResult2}.
After 50s, the human operator repeated adjustments 
because non-accessible robots followed late.
Comparing to the result of delay free case, 
the human input looks more fluctuated,
which can be interpreted as the deterioration of human's
operability.

In summary, position synchronization by the proposed architecture is verified
through this experiment.
In addition to this result, 
since we observed performance degradation of human, 
thorough investigation is needed regarding human performance.
To this end, we need a criteria to measure the performance qualitatively.

%
\section{Conclusions}
\label{sec:conclusions}

In this paper, we have investigated a cooperative control architecture 
of human-robotic networks in the presence of inter-robot communication delays, 
where the objective is to guarantee position synchronization 
to human's desired position. 
First, we proposed control architecture based on \cite{HCF15,HCYF17} 
and the scattering transformation. 
Then, we showed passivity of the robotic network. 
Next, by using human passivity assumption, 
we showed that the feedback system achieves position synchronization. 
Finally, we demonstrated the efficiency of the proposed architecture through experiments, which managed to move all robots to target position 
in the same way as delay free case.
Furthermore, we investigated the influences of inter-robot communication delays
on human's operability.

\addtolength{\textheight}{-12cm}   



%

%



\end{document}